\newcommand{\lyxaddress}[1]{
\par {\raggedright #1
\vspace{1.4em}
\noindent\par}
}
 \theoremstyle{definition}
  \newtheorem{example}{\protect\examplename}
  \theoremstyle{plain}
  \newtheorem{prop}{\protect\propositionname}
\theoremstyle{plain}
\newtheorem{thm}{\protect\theoremname}
  \theoremstyle{plain}
  \newtheorem{lem}{\protect\lemmaname}
  \theoremstyle{remark}
  \newtheorem{rem}{\protect\remarkname}
\DeclareMathOperator{\unif}{U}
\DeclareMathOperator{\pri}{prior}
  \providecommand{\examplename}{Example}
  \providecommand{\lemmaname}{Lemma}
  \providecommand{\propositionname}{Proposition}
  \providecommand{\remarkname}{Remark}
\providecommand{\theoremname}{Theorem}
\begin{document}

\title{\textbf{\large ~}\\
Blending Bayesian and frequentist methods according to the precision of
prior information with an application to hypothesis testing\\
\textbf{~}}

\maketitle
~~\\
David R. Bickel

\lyxaddress{Ottawa Institute of Systems Biology\\
Department of Biochemistry, Microbiology, and Immunology\\
University of Ottawa; 451 Smyth Road; Ottawa, Ontario, K1H 8M5}
\begin{abstract}
The following zero-sum game between nature and a statistician blends
Bayesian methods with frequentist methods such as p-values and confidence
intervals. Nature chooses a posterior distribution consistent with
a set of possible priors. At the same time, the statistician selects
a parameter distribution for inference with the goal of maximizing
the minimum Kullback-Leibler information gained over a confidence
distribution or other benchmark distribution. An application to testing
a simple null hypothesis leads the statistician to report a posterior
probability of the hypothesis that is informed by both Bayesian and
frequentist methodology, each weighted according how well the prior
is known.

As is generally acknowledged, the Bayesian approach is ideal given
knowledge of a prior distribution that can be interpreted in terms
of relative frequencies. On the other hand, frequentist methods such
as confidence intervals and p-values have the advantage that they
perform well without knowledge of such a distribution of the parameters.
Since neither the Bayesian approach nor the frequentist approach is
entirely satisfactory in situations involving partial knowledge of
the prior distribution, the proposed procedure reduces to a Bayesian
method given complete knowledge of the prior, to a frequentist method
given complete ignorance about the prior, and to a blend between the
two methods given partial knowledge of the prior. The blended approach
resembles the Bayesian method rather than the frequentist method to
the precise extent that the prior is known. 

The problem of testing a point null hypothesis illustrates the proposed
framework. The blended probability that the null hypothesis is true
is equal to the p-value or a lower bound of an unknown Bayesian posterior
probability, whichever is greater. Thus, given total ignorance represented
by a lower bound of 0, the p-value is used instead of any Bayesian
posterior probability. At the opposite extreme of a known prior, the
p-value is ignored. In the intermediate case, the possible Bayesian
posterior probability that is closest to the p-value is used for inference.
Thus, both the Bayesian method and the frequentist method influence
the inferences made.
\end{abstract}
\textbf{Keywords:} blended inference; confidence distribution; confidence
posterior; hybrid inference; maximum entropy; maxmin expected utility;
minimum cross entropy; minimum divergence; minimum information for
discrimination; minimum relative entropy; observed confidence level;
robust Bayesian analysis

\section{\label{sec:Introduction}Introduction}

\subsection{\label{sub:Motivation}Motivation}

Various compromises between Bayesian and frequentist approaches to
statistical inference represent first attempts to combining attractive
aspects of each approach \citep{good1983good}. While the more recent
the hybrid inference approach of \citet{Yuan20092458} succeeded in
leveraging Bayesian point estimators with maximum likelihood estimates,
reducing to the former or the latter in the presence or absence of
a reliably estimated prior on all parameters, how to extend the theory
beyond point estimation is not yet clear. Further, hybrid inference
in its current form does not cover the case of a parameter of interest
that has a partially known prior. Since such partial knowledge of
a prior occurs in many scientific inference situations, it calls for
a theoretical framework for method development that appropriately
blends Bayesian and frequentist methods.

Ideally, blended inference would meet these criteria:
\begin{enumerate}
\item \textbf{Complete knowledge of the prior.} If the prior is known, the
corresponding posterior is used for inference. Among statisticians,
this principle is almost universally acknowledged. However, it is
rarely the case of the prior is known for all practical purposes.
\item \textbf{Negligible knowledge of the prior.} If there is no reliable
knowledge of a prior, inference is based on methods that do not require
such knowledge. This principle motivates not only the development
of confidence intervals and p-values but also Bayesian posteriors
derived from improper and data-dependent priors. Accordingly, blended
inference must allow the use of such methods when applicable.
\item \textbf{Continuum between extremes.} Inference relies on the prior
to the extent that it is known while relying on the other methods
to the extent that it is not known. Thus, there is a gradation of
methodology between the above two extremes. The premise of this paper
is that this intermediate scenario calls for a careful balance between
pure Bayesian methods on one hand and impure Bayesian or non-Bayesian
methods on the other hand.
\end{enumerate}
Instead of framing the knowledge of a prior in terms of confidence
intervals, as in pure empirical Bayes approaches, it will be framed
more generally herein in terms of a set of plausible priors, as in
interval probability \citep{ISI:000086923900003,ISI:000173555100002,ISI:000224953000004}
and robust Bayesian \citep{berger1984robustness} approaches. Whereas
the concept of an unknown prior cannot arise in strict Bayesian statistics,
it does arise in robust Bayesian statistics when the levels of belief
of an intelligent agent have not been fully assessed \citep{berger1984robustness}.
Unknown priors also occur in many more objective contexts involving
purely frequentist interpretations of probability in terms of variability
in the observable world rather than the uncertainty in the mind of
an agent. For example, frequency-based priors are routinely estimated
under random effects and empirical Bayes models; see, e.g., \citet{efron_large-scale_2010}.
(Remark \ref{rem:interpretation} comments further on interpretations
of probability and relaxes the convenient assumption of a true prior.) 

With respect to the problem at hand, the most relevant robust Bayesian
approaches are the \emph{minimax Bayes risk} ({}``$\Gamma$-minimax'')
practice of minimizing the maximum Bayes risk \citep{Robbins1951131,RefWorks:179,citeulike:7129735}
and the \emph{maxmin expected utility} ({}``conditional $\Gamma$-minimax'')
practice of maximizing the minimum posterior expected payoff or, equivalently,
minimizing the maximum posterior expected loss \citep{ISI:A1989AL55500003,DasGupta1989333,citeulike:7129735,ISI:000173555100002,ISI:000224953000004}.
\citet{ISI:000224953000004} reviews both methods in terms of interval
probabilities that need not be subjective. With typical loss functions,
the former method meets the above criteria for classical minimax alternatives
to Bayesian methods but does not apply to other attractive alternatives.
For example, several confidence intervals, p-values, and objective-Bayes
posteriors routinely used in biostatistics are not minimax optimal.
(\citet{RefWorks:1302} and \citet{FraserAncillaries2004a} argued
that requiring the optimality of frequentist procedures can lead to
trade-offs between hypothetical samples that potentially mislead scientists
or yield pathological procedures.) Optimality in the classical sense is not required of the alternative
procedures under the framework outlined below, which can be understood
in terms of maxmin expected utility with a payoff function that incorporates
the alternative procedures to be used as a benchmark for the Bayesian
posteriors.

\subsection{\label{sec:Overview}Heuristic overview}

To define a general theory of blended inference that meets a formal
statement of the three criteria, Section \ref{sec:General-theory}
introduces a variation of a zero-sum game of \citet{ISI:A1979GQ08900002},
\citet{e3030191}, and \citet{Topsoe2007b}. (The discrete version
of the game also appeared in \citet{Phuber77}, and \citet{Gruenwald20041367}
interpreted it as a special case of the maxmin expected utility problem.)
The {}``nature'' opponent selects a prior consistent with the available
knowledge as the {}``statistician'' player selects a posterior distribution
with the aim of maximizing the minimum information gained relative
to one or more alternative methods. Such benchmark methods may be
confidence interval procedures, frequentist hypothesis tests, or other
techniques that are not necessarily Bayesian.

From that theory, Section \ref{sec:Hypothesis-testing} derives a
widely applicable framework for testing hypotheses. For concreteness,
the motivating results are heuristically summarized here. Consider
the problem of testing $H_{0}:\theta_{\ast}=0$, the hypothesis that
a real-valued parameter $\theta_{\ast}$ of interest is equal to the
point 0 on the real line $\mathbb{R}$. The observed data vector $x$
is modeled as a realization of a random variable denoted by $X$.
Let $p\left(x\right)$ denote the p-value resulting from a statistical
test.

It has long been recognized that the p-value for a simple (point)
null hypothesis is often smaller than Bayesian posterior probabilities
of the hypothesis \citep{Lindley1957b,RefWorks:1007}. Suppose $\theta_{\ast}$
has an unknown prior distribution according to which the prior probability
of $H_{0}$ is $\pi_{0}$. While $\pi_{0}$ is unknown, it is assumed
to be no less than some known lower bound denoted by $\underline{\pi}_{0}$.

Following the methodology of \citet{ISI:A1994QX01100012}, \citet{RefWorks:1218}
found a generally applicable lower bound on the Bayes factor. As Section
\ref{sub:Bound} will explain, that bound immediately leads to
\begin{equation}
\underline{\Pr}\left(H_{0}|p\left(X\right)=p\left(x\right)\right)=\left(1-\left(\frac{1-\underline{\pi}_{0}}{\underline{\pi}_{0}ep\left(x\right)\log p\left(x\right)}\right)\right)^{-1}\label{eq:LFDR-bound}
\end{equation}
as a lower bound on the unknown posterior probability of the null
hypothesis for $p\left(x\right)<1/e$ and to $\underline{\pi}_{0}$
as a lower bound on the probability if $p\left(x\right)\ge1/e$.  

In addition to $\Pr\left(H_{0}|p\left(X\right)=p\left(x\right)\right),$
the unknown Bayesian posterior probability of $H_{0}$, there is a
frequentist posterior probability of $H_{0}$ that will guide selection
of a posterior probability for inference based on $\pi_{0}\ge\underline{\pi}_{0}$
and other constraints summarized by $\Pr\left(H_{0}|p\left(X\right)=p\left(x\right)\right)\ge\underline{\Pr}\left(H_{0}|p\left(X\right)=p\left(x\right)\right)$.
While it is incorrect to interpret the p-value $p\left(x\right)$
as a \emph{Bayesian} probability, it will be seen in Section \ref{sub:Confidence-benchmark}
that $p\left(x\right)$ is a \emph{confidence} posterior probability
that $H_{0}$ is true.

With the confidence posterior as the benchmark, the solution to the
optimization problem described above gives the blended posterior probability
that the null hypothesis is true. It is simply the maximum of the
p-value and the lower bound on the Bayesian posterior probability:
\begin{equation}
\Pr\left(H_{0};p\left(x\right)\right)=p\left(x\right)\vee\underline{\Pr}\left(H_{0}|p\left(X\right)=p\left(x\right)\right).\label{eq:blendedLFDR}
\end{equation}
By plotting $\Pr\left(H_{0};p\left(x\right)\right)$ as a function
of $p\left(x\right)$ and $\underline{\pi}_{0}$, Figures \ref{fig:blend2d}
and \ref{fig:blend3d} illustrate each of the above criteria for blended
inference:
\begin{enumerate}
\item \textbf{Complete knowledge of the prior.} In this example, the prior
is only known when $\underline{\pi}_{0}=1$, in which case 
\[
\Pr\left(H_{0};p\left(x\right)\right)=\underline{\Pr}\left(H_{0}|p\left(X\right)=p\left(x\right)\right)=1
\]
for all $p\left(x\right)$. Thus, the p-value is ignored in the presence
of a known prior. 
\item \textbf{Negligible knowledge of the prior.} There is no knowledge
of the prior when $\underline{\pi}_{0}=0$ and negligible knowledge
when $\underline{\pi}_{0}$ is so low that $\underline{\Pr}\left(H_{0}|p\left(X\right)=p\left(x\right)\right)\le p\left(x\right)$.
In such cases, $\Pr\left(H_{0};p\left(x\right)\right)=p\left(x\right)$,
and the Bayesian posteriors are ignored.
\item \textbf{Continuum between extremes.} When $\underline{\pi}_{0}$ is
of intermediate value in the sense that $\underline{\Pr}\left(H_{0}|p\left(X\right)=p\left(x\right)\right)$
is exclusively between $p\left(x\right)$ and 1, 
\[
\Pr\left(H_{0};p\left(x\right)\right)=\underline{\Pr}\left(H_{0}|p\left(X\right)=p\left(x\right)\right)<1.
\]
Consequently, $\Pr\left(H_{0};p\left(x\right)\right)$ increases gradually
from $p\left(x\right)$ to 1 as $\underline{\pi}_{0}$ increases (Figures
\ref{fig:blend2d} and \ref{fig:blend3d}). In this case, the blended
posterior lies in the set of allowed Bayesian posteriors but is on
the boundary of that set that is the closest to the p-value. Thus,
both the p-value and the Bayesian posteriors influence the blended
posterior and thus the inferences made on its basis.
\end{enumerate}
The plotted parameter distribution will be presented in Section \ref{sub:Blended-posterior}
as a widely applicable blended posterior.

Finally, Section \ref{sec:Remarks} offers additional details and
generalizations in a series of remarks.

\begin{figure}
\includegraphics{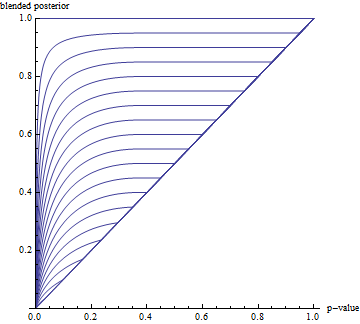}

\caption{Blended posterior probability that the null hypothesis is true versus
the p-value. The curves correspond to lower bounds of prior probabilities
ranging in 5\% increments from 0\% on the bottom to 100\% on the top.\label{fig:blend2d}}
\end{figure}
\begin{figure}
\includegraphics{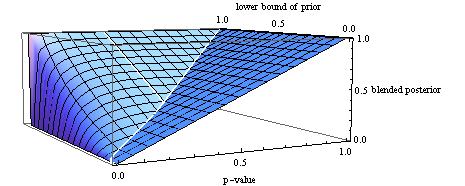}

\includegraphics{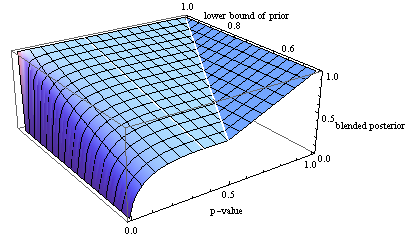}

\caption{Blended posterior probability that the null hypothesis is true versus
the p-value and the lower bound of the prior probability that the
null hypothesis is true. The top plot displays the full domain, half
of which is shown in the bottom plot.\label{fig:blend3d}}
\end{figure}

\section{\label{sec:General-theory}General theory}

\subsection{Preliminary notation and definitions}

Denote the observed data set, typically a vector or matrix of observations,
by $x$, a member of a set $\mathcal{X}$ that is endowed with a $\sigma$-algebra
$\mathfrak{X}$. The value of $x$ determines two sets of posterior
distributions that can be blended for inference about the value of
a target parameter. Much of the following notation is needed to transform
general Bayesian posteriors and confidence posteriors or other benchmark
posteriors such that they are defined on the same measurable space,
that of the target parameter. (A confidence posterior, to be defined
in Section \ref{sub:Confidence-posterior-theory}, is a parameter
distribution from which confidence intervals and p-values may be extracted.
As such, it facilitates blending typical frequentist procedures with
Bayesian procedures.)

\subsubsection{\label{sub:Bayesian-posteriors}Bayesian posteriors}

With some measurable space $\left(\dot{\Theta}_{\ast},\dot{\mathcal{A}}_{\ast}\right)$
for parameter values in $\dot{\Theta}_{\ast}$, let $\mathcal{P}_{\ast}^{\pri}$
denote a set of probability distributions on $\left(\mathcal{X}\times\dot{\Theta}_{\ast},\mathfrak{X}\otimes\dot{\mathcal{A}}_{\ast}\right)$.
Any distribution in $\mathcal{P}_{\ast}^{\pri}$ is called a \emph{prior
(distribution)}, understood in the broad sense of a model that includes
the possible likelihood functions as well as the parameter distribution.
It encodes the constraints and other information available about the
parameter before observing $x$.

On the other hand, any distribution of a parameter is called a \emph{posterior
(distribution)} if it depends on $x$. For some $P_{\ast}^{\pri}\in\mathcal{P}_{\ast}^{\pri}$,
an example of a posterior distribution on $\left(\dot{\Theta}_{\ast},\dot{\mathcal{A}}_{\ast}\right)$
is $\dot{P}_{\ast}=P_{\ast}^{\pri}\left(\bullet\vert X=x\right)$,
where $X$ is a random variable of a distribution on $\left(\mathcal{X},\mathfrak{X}\right)$
that is determined by $P_{\ast}^{\pri}$. $\dot{P}_{\ast}$ is called
a \emph{Bayesian posterior (distribution)} since it is equal to a
conditional distribution of the parameter given $X=x$. Adapting an
apt term from \citet{Topsoe2007b}, the set $\dot{\mathcal{P}}_{\ast}=\left\{ P_{\ast}^{\pri}\left(\bullet\vert X=x\right):P_{\ast}^{\pri}\in\mathcal{P}_{\ast}^{\pri}\right\} $
of Bayesian posteriors on $\left(\dot{\Theta}_{\ast},\dot{\mathcal{A}}_{\ast}\right)$
may be considered the {}``knowledge base.'' For a set $\dot{\Theta}$,
if $\dot{\tau}:\dot{\Theta}_{\ast}\rightarrow\Theta$ is an $\dot{\mathcal{A}}_{\ast}$-measurable
map and if $\dot{\theta}_{\ast}$ has distribution $\dot{P}_{\ast}\in\dot{\mathcal{P}}_{\ast}$,
then $\dot{\theta}=\dot{\tau}\left(\dot{\theta}_{\ast}\right)$, referred
to as an \emph{inferential target of} $\dot{P}_{\ast}$, has induced
probability space $\left(\Theta,\mathcal{A},\dot{P}\right)$. The
set 
\[
\dot{\mathcal{P}}=\left\{ \dot{P}:\dot{\tau}\left(\dot{\theta}_{\ast}\right)\sim\dot{P},\dot{\theta}_{\ast}\sim\dot{P}_{\ast}\in\dot{\mathcal{P}}_{\ast}\right\} 
\]
of all distributions thereby induced and the set $\mathcal{P}$ of
all probability distributions on $\left(\Theta,\mathcal{A}\right)$
are related by $\dot{\mathcal{P}}\subseteq\mathcal{P}$.
\begin{example}
\label{exa:testing}In the hypothesis test of Section \ref{sec:Overview},
$\dot{\theta}=0$ if the null hypothesis that $\dot{\theta}_{\ast}=0$
is true and $\dot{\theta}=1$ if the alternative hypothesis that $\dot{\theta}_{\ast}\ne0$
is true, where $\dot{\theta}_{\ast}$ and $\dot{\theta}$ are random
variables with distributions respectively defined on the Borel space
$\left(\mathbb{R},\mathcal{B}\left(\mathbb{R}\right)\right)$ and
the discrete space $\left(\left\{ 0,1\right\} ,2^{\left\{ 0,1\right\} }\right)$,
where $2^{\left\{ 0,1\right\} }$ is the power set of $\left\{ 0,1\right\} $.
Thus, in this case, $\dot{\tau}$ is the indicator function $1_{\left(-\infty,0\right)\cup\left(0,\infty\right)}:\mathbb{R}\rightarrow\left\{ 0,1\right\} $,
yielding $\dot{\theta}=1_{\left(-\infty,0\right)\cup\left(0,\infty\right)}\left(\dot{\theta}_{\ast}\right)$.
Section \ref{sec:Hypothesis-testing} considers this example in more
detail.
\end{example}
A function that transforms a set of parameter distributions to a single
parameter distribution on the same measurable space is called an \emph{inference
process} \citep{Paris232410,Paris199777}. The resulting distribution
is known as a {}``representation'' \citep{ISI:000173555100002}~or {}``reduction'' \citep{CoherentFrequentism}
of the set. Perhaps the best known inference process for a discrete
parameter set $\Theta$ is that of the \emph{maximum entropy principle},
which would select a member of $\dot{\mathcal{P}}$ such that it has
higher entropy than any other member of the set (see Remark \ref{rem:maxent}).
This paper will propose a wide class of inference processes such that
each transforms $\dot{\mathcal{P}}$ to a member of $\mathcal{P}$
on the basis the following concept of a benchmark distribution on
$\left(\Theta,\mathcal{A}\right)$.

\subsubsection{\label{sub:Benchmark-posteriors}Benchmark posteriors}

For the convenience of the reader, the same Latin and Greek letters
will be used for the set of posteriors that will represent a gold
standard or benchmark method of inference as for the Bayesian posteriors
of Section \ref{sub:Bayesian-posteriors}, with the double-dot $\ddot{\bullet}$
replacing the single-dot $\dot{\bullet}$. Let $\ddot{\mathcal{P}}_{\ast}$
represent a set of posterior distributions on some measurable space
$\left(\ddot{\Theta}_{\ast},\ddot{\mathcal{A}}_{\ast}\right)$, and
let $\ddot{\mathfrak{P}}_{\ast}$ represent a set of such sets. For
instance, considering any $\ddot{P}_{\ast}$ in $\ddot{\mathcal{P}}_{\ast}$,
$\ddot{P}_{\ast}$ may be a confidence posterior (a fiducial-like
distribution to be defined precisely in Section \ref{sub:Confidence-benchmark}),
a generalized fiducial posterior of \citet{RefWorks:1175}, or even
a Bayesian posterior based on an improper prior. (In the first case, nested confidence intervals with inexact coverage
rates generate a set $\ddot{\mathcal{P}}_{\ast}$ of multiple confidence
posteriors rather than the single confidence posterior that is generated
by exact confidence intervals \citep{CoherentFrequentism}.) Suppose there exists a function $\boldsymbol{\ddot{\tau}}:\ddot{\mathfrak{P}}_{\ast}\rightarrow\Theta$
such that $\ddot{P}$, the probability distribution of $\boldsymbol{\ddot{\tau}}\left(\ddot{\mathcal{P}}_{\ast}\right)$,
is defined on $\left(\Theta,\mathcal{A}\right)$. $\ddot{P}$ is
called the \emph{benchmark posterior (distribution)}, and $\ddot{\theta}=\boldsymbol{\ddot{\tau}}\left(\ddot{\mathcal{P}}_{\ast}\right)$
is the \emph{inferential target of }$\ddot{\mathcal{P}}_{\ast}$.
It follows that $\ddot{P}$ is in $\mathcal{P}$ but not necessarily
in $\dot{\mathcal{P}}$. 
\begin{example}
Consider a model in which the full parameter $\dot{\theta}_{\ast}\in\dot{\Theta}_{\ast}$
consists of an interest parameter $\dot{\theta}$ and a nuisance parameter
$\dot{\lambda}$. The measurable space of $\dot{\theta}_{\ast}=\left\langle \dot{\theta},\dot{\lambda}\right\rangle $
is denoted by $\left(\dot{\Theta}_{\ast},\dot{\mathcal{A}}_{\ast}\right)$,
and that of $\dot{\theta}$ by $\left(\Theta,\mathcal{A}\right)$.
Suppose that a set of Bayesian posteriors is available for $\dot{\theta}_{\ast}$
but that nested confidence intervals are only available for an unknown
parameter $\theta\in\Theta$. It follows that a confidence posterior
$\ddot{P}$ is available on $\left(\Theta,\mathcal{A}\right)$ but
not on $\left(\dot{\Theta}_{\ast},\dot{\mathcal{A}}_{\ast}\right)$.
Then the framework of this section can be applied by using the function
$\dot{\tau}$ such that $\theta=\dot{\tau}\left(\dot{\theta}_{\ast}\right)$
in order to project the Bayesian posteriors onto $\left(\Theta,\mathcal{A}\right)$,
the measurable space on which $\ddot{P}$ is defined. In this case,
since there is only one possible benchmark posterior, the function
$\boldsymbol{\ddot{\tau}}$ need not be explicitly constructed.
\end{example}
The function $\boldsymbol{\ddot{\tau}}$ allows consideration of a
set of possible benchmark posteriors by transforming it to a single
benchmark posterior defined on $\left(\Theta,\mathcal{A}\right)$,
the same measurable space as the above Bayesian posteriors of $\dot{\theta}$.
Since that function is unusual, two ways to compose it will now be
explained.
\begin{example}
\label{exa:inference-then-transformation}Consider the inference process
$\ddot{\Pi}:\ddot{\mathfrak{P}}_{\ast}\rightarrow\mathcal{P}_{\ast}$,
where $\mathcal{P}_{\ast}$ is the set of all probability distributions
on $\left(\ddot{\Theta}_{\ast},\ddot{\mathcal{A}}_{\ast}\right)$.
Define the random variable $\ddot{\theta}_{\ast}$ to have distribution
$\ddot{\Pi}\left(\ddot{\mathcal{P}}_{\ast}\right)\left(\bullet\right)=\ddot{\Pi}\left(\ddot{\mathcal{P}}_{\ast}\right)$.
If $\ddot{\tau}:\ddot{\Theta}_{\ast}\rightarrow\Theta$ is an $\ddot{\mathcal{A}}_{\ast}$-measurable
function, then $\ddot{\theta}=\ddot{\tau}\left(\ddot{\theta}_{\ast}\right)$
is the inferential target of $\ddot{\mathcal{P}}_{\ast}$. Further,
the distribution $\ddot{P}$ of $\ddot{\theta}$ is the benchmark
posterior.
\end{example}
~
\begin{example}
\label{exa:transformation-then-inference}Whereas Example \ref{exa:inference-then-transformation}
applied an inference process before a parameter transformation, this
example reverses the order by first applying $\ddot{\tau}$. Let $\ddot{\mathcal{P}}$
denote the subset of $\mathcal{P}$ consisting of all distributions
of the parameters transformed by $\ddot{\tau}$: 
\[
\ddot{\mathcal{P}}=\left\{ P:\ddot{\tau}\left(\ddot{\theta}_{\ast}\right)\sim P,\ddot{\theta}_{\ast}\sim\ddot{P}_{\ast}\in\ddot{\mathcal{P}}_{\ast}\right\} .
\]
Then an inference process transforms $\ddot{\mathcal{P}}$ to the
benchmark posterior $\ddot{P}$, which in turn is the distribution
of $\ddot{\theta}$, the inferential target of $\ddot{\mathcal{P}}_{\ast}$.
\end{example}
~

\subsection{\label{sub:Blended-inference}Blended inference}

In terms of Radon-Nikodym differentiation, the \emph{information divergence}
of $P$ with respect to $Q$ on $\left(\Theta,\mathcal{A}\right)$
is
\begin{equation}
I\left(P||Q\right)=\int dP\log\left(\frac{dP}{dQ}\right)\label{eq:cross-entropy}
\end{equation}
if $P\ll Q$ and $I\left(P||Q\right)=\infty$ otherwise. $I\left(P||Q\right)$
is also known as cross/relative entropy, $I$-divergence, information
for discrimination, and Kullback-Leibler divergence. Other measures
of information may also be used (Remark \ref{rem:generalization}).
For any posteriors $\dot{P}\in\dot{\mathcal{P}}$ and $Q\in\mathcal{P}$,
the \emph{inferential gain} $I\left(\dot{P}||\ddot{P}\rightsquigarrow Q\right)$
\emph{of} $Q$ \emph{relative to} $\ddot{P}$ \emph{given} $\dot{P}$
is the amount of information gained by making inferences on the basis
of $Q$ instead of the benchmark posterior $\ddot{P}$:
\[
I\left(\dot{P}||\ddot{P}\rightsquigarrow Q\right)=I\left(\dot{P}||\ddot{P}\right)-I\left(\dot{P}||Q\right).
\]

Let $\dot{\mathcal{P}}\left(\ddot{P}\right)$ denote the largest
subset of $\dot{\mathcal{P}}$ such that the information divergence
of any of its members with respect to $\ddot{P}$ is finite. That
is,
\begin{equation}
\dot{\mathcal{P}}\left(\ddot{P}\right)=\left\{ \dot{P}\in\dot{\mathcal{P}}:I\left(\dot{P}||\ddot{P}\right)<\infty\right\} ,\label{eq:finite}
\end{equation}
which is nonempty by assumption. (The assumption is not necessary
under the generalization described in Remark \ref{rem:infinity-allowed}.)

The \emph{blended posterior (distribution)} $\hat{P}$ is the probability
distribution on $\left(\Theta,\mathcal{A}\right)$ that maximizes
the inferential gain relative to the benchmark posterior given the
worst-case posterior restricted by the constraints that defined $\dot{\mathcal{P}}$
and $\dot{\mathcal{P}}\left(\ddot{P}\right)$:
\begin{equation}
\inf_{\dot{P}\in\dot{\mathcal{P}}\left(\ddot{P}\right)}I\left(\dot{P}||\ddot{P}\rightsquigarrow\hat{P}\right)=\sup_{Q\in\mathcal{P}}\inf_{\dot{P}\in\dot{\mathcal{P}}\left(\ddot{P}\right)}I\left(\dot{P}||\ddot{P}\rightsquigarrow Q\right),\label{eq:maximin}
\end{equation}
where the supremum and infinum over any set including an indeterminate
number are $\infty$ and $-\infty$, respectively \citep{Topsoe2007b}.
Inferences based on $\hat{P}$ are blended in the sense that they
depend on both $\dot{\mathcal{P}}$ and $\ddot{P}$ in the ways to
be specified in Section \ref{sub:Properties}.

The main result of Theorem 2 of \citet{Topsoe2007b} gives a simply
stated solution of the optimality problem of equation \eqref{eq:maximin}
under broad conditions.
\begin{prop}
\label{pro:minxent}If $I\left(\dot{P}||\ddot{P}\right)<\infty$ for
some $\dot{P}\in\dot{\mathcal{P}}$ and if $\dot{\mathcal{P}}\left(\ddot{P}\right)$
is convex, then the blended posterior $\hat{P}$ is the probability
distribution in $\dot{\mathcal{P}}$ that minimizes the information
divergence with respect to the benchmark posterior:
\begin{equation}
I\left(\hat{P}||\ddot{P}\right)=\inf_{\dot{P}\in\dot{\mathcal{P}}\left(\ddot{P}\right)}I\left(\dot{P}||\ddot{P}\right).\label{eq:minxent-result}
\end{equation}
\end{prop}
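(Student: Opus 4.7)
The plan is to prove the proposition by the standard Topsøe-style redundancy argument, which converts the maximin problem into a one-sided I-projection problem. I will bracket the value of the game between an elementary inf-sup upper bound and a matching lower bound delivered by choosing the I-projection $\hat{P}$ as the statistician's strategy. The key analytic tool will be the Pythagorean (compensation) inequality for information divergence on a convex set.

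First I would dispose of the upper bound. For any fixed $\dot{P}\in\dot{\mathcal{P}}(\ddot{P})$, the map $Q\mapsto I(\dot{P}\Vert\ddot{P}\rightsquigarrow Q)=I(\dot{P}\Vert\ddot{P})-I(\dot{P}\Vert Q)$ is maximized over $\mathcal{P}$ at $Q=\dot{P}$, where $I(\dot{P}\Vert Q)=0$, giving supremum $I(\dot{P}\Vert\ddot{P})$. Applying the trivial inequality $\sup\inf\le\inf\sup$ then yields
\begin{equation*}
\sup_{Q\in\mathcal{P}}\inf_{\dot{P}\in\dot{\mathcal{P}}(\ddot{P})}I(\dot{P}\Vert\ddot{P}\rightsquigarrow Q)\;\le\;\inf_{\dot{P}\in\dot{\mathcal{P}}(\ddot{P})}I(\dot{P}\Vert\ddot{P}).
\end{equation*}

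Next I would exhibit a strategy that attains this upper bound. Let $\hat{P}$ be a minimizer of $I(\,\cdot\,\Vert\ddot{P})$ over $\dot{\mathcal{P}}(\ddot{P})$; existence is the classical Csiszár I-projection result for convex sets under the implicit closure conditions, and the hypothesis that $I(\dot{P}\Vert\ddot{P})<\infty$ for some $\dot{P}$ guarantees that the infimum is finite. The central lemma is then the compensation inequality: for every $\dot{P}\in\dot{\mathcal{P}}(\ddot{P})$,
\begin{equation*}
I(\dot{P}\Vert\ddot{P})\;\ge\;I(\dot{P}\Vert\hat{P})+I(\hat{P}\Vert\ddot{P}).
\end{equation*}
Rearranging gives $I(\dot{P}\Vert\ddot{P}\rightsquigarrow\hat{P})\ge I(\hat{P}\Vert\ddot{P})=\inf_{\dot{P}}I(\dot{P}\Vert\ddot{P})$; taking the infimum over $\dot{P}$ on the left shows that the choice $Q=\hat{P}$ saturates the upper bound. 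Equation \eqref{eq:maximin} therefore holds with $\hat{P}$, proving \eqref{eq:minxent-result}.

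The main obstacle is the compensation inequality. I would prove it by the usual variational trick: for fixed $\dot{P}\in\dot{\mathcal{P}}(\ddot{P})$ and $t\in[0,1]$, convexity of $\dot{\mathcal{P}}(\ddot{P})$ places $P_{t}=(1-t)\hat{P}+t\dot{P}$ in the same set, so the convex function $t\mapsto I(P_{t}\Vert\ddot{P})$ attains its minimum at $t=0$ and has nonnegative right derivative there. A direct differentiation under the integral, justified by the finiteness of $I(\hat{P}\Vert\ddot{P})$ and $I(\dot{P}\Vert\ddot{P})$, reduces the nonnegativity of this derivative to the claimed inequality. This is the only place where convexity of $\dot{\mathcal{P}}(\ddot{P})$ is genuinely used; everything else is manipulation of the identity defining inferential gain.
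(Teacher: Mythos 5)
Your argument is correct in substance, but it is a genuinely different kind of proof from the one in the paper: the paper's proof is essentially a reduction to Theorem 2 of Tops\o e (2007), checking only that the extra hypothesis of that theorem --- $I\left(\dot{P}||\ddot{P}\right)<\infty$ for \emph{all} $\dot{P}\in\dot{\mathcal{P}}\left(\ddot{P}\right)$ --- holds automatically because $\dot{\mathcal{P}}\left(\ddot{P}\right)$ is defined by exactly that finiteness condition, while the nonemptiness hypothesis supplies the required witness. You instead reprove the cited theorem from scratch: the $\sup\inf\le\inf\sup$ bracket gives the upper bound, and the Pythagorean (compensation) inequality for the $I$-projection onto a convex set shows that $Q=\hat{P}$ saturates it; your variational derivation of that inequality from the right derivative of $t\mapsto I\left(P_{t}||\ddot{P}\right)$ at $t=0$ is the standard Csisz\'ar argument and correctly isolates convexity as the only place it is needed. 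What your route buys is a self-contained proof that makes the mechanism visible; what the paper's route buys is brevity and a clean delegation of the delicate existence question. That question is the one soft spot in your write-up: the existence of a minimizer of $I\left(\cdot||\ddot{P}\right)$ over $\dot{\mathcal{P}}\left(\ddot{P}\right)$ does not follow from the stated hypotheses (finiteness of the infimum plus convexity), since $\dot{\mathcal{P}}\left(\ddot{P}\right)$ need not be closed in total variation; you acknowledge this with the phrase ``implicit closure conditions,'' but strictly you should either add such a condition explicitly or note, as Tops\o e's formulation does, that the statement of the proposition itself presupposes attainment, so the burden is shared with the paper.
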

\begin{proof}
\citet{Topsoe2007b} proved the result from inequalities of information
theory given the additional stated condition of his Theorem 2 that
$I\left(\dot{P}||\ddot{P}\right)<\infty$ for all $\dot{P}\in\dot{\mathcal{P}}\left(\ddot{P}\right)$.
(See Remark \ref{rem:infinity-allowed}.) The condition that $I\left(\dot{P}||\ddot{P}\right)<\infty$
for some $\dot{P}\in\dot{\mathcal{P}}$ and the above definition of
$\dot{\mathcal{P}}\left(\ddot{P}\right)$ ensure that the condition
is met.
\end{proof}
Alternatively, the minimization of information divergence may define
$\hat{P}$ rather than result from its definition in terms of the
game (Remark \ref{rem:minxent}).

\subsection{\label{sub:Properties}Properties of blended inference}

The desiderata of Section \ref{sec:Introduction} for blended inference
can now be formalized. A posterior distribution $\tilde{P}\left(\bullet;\dot{\mathcal{P}},\ddot{P}\right)$
on $\left(\Theta,\mathcal{A}\right)$ is said to \emph{blend} the
set $\dot{\mathcal{P}}$ of Bayesian posteriors with the benchmark
posterior $\ddot{P}$ for inference about the parameter in $\Theta$
provided that $\tilde{P}\left(\bullet;\dot{\mathcal{P}},\ddot{P}\right)$
satisfies the following criteria under the conditions of Proposition
\ref{pro:minxent}:
\begin{enumerate}
\item \textbf{\label{enu:Complete-knowledge}Complete knowledge of the prior.}
If $\dot{\mathcal{P}}$ has a single member $\dot{P}$, then $\tilde{P}\left(\bullet;\dot{\mathcal{P}},\ddot{P}\right)=\dot{P}$.
\item \textbf{\label{enu:Negligible-knowledge}Negligible knowledge of the
prior.} If $\ddot{P}\in\dot{\mathcal{P}}$ and if $\dot{\mathcal{P}}$
has at least two members, then $\tilde{P}\left(\bullet;\dot{\mathcal{P}},\ddot{P}\right)=\ddot{P}$.
\item \textbf{\label{enu:Continuity-between-extremes}Continuum between
extremes.} For any $D\ge0$ and any $\mathcal{P}^{\star}\subseteq\mathcal{P}$
such that 
\begin{equation}
\sup_{P\in\mathcal{P}^{\star},\dot{P}\in\dot{\mathcal{P}}\left(\ddot{P}\right)}\left|I\left(P||\ddot{P}\right)-I\left(\dot{P}||\ddot{P}\right)\right|\le D\label{eq:contin-condition}
\end{equation}
 and such that $\dot{\mathcal{P}}\left(\ddot{P}\right)\cup\mathcal{P}^{\star}$
is convex, 
\begin{equation}
\left|I\left(\tilde{P}\left(\bullet;\dot{\mathcal{P}}\cup\mathcal{P}^{\star},\ddot{P}\right)||\ddot{P}\right)-I\left(\tilde{P}\left(\bullet;\dot{\mathcal{P}},\ddot{P}\right)||\ddot{P}\right)\right|\le D.\label{eq:contin-criterion}
\end{equation}
\end{enumerate}
\begin{thm}
The blended posterior $\hat{P}$ blends the set $\dot{\mathcal{P}}$
of Bayesian posteriors with the benchmark posterior $\ddot{P}$ for
inference about the parameter in $\Theta$. \end{thm}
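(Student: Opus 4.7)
The plan is to invoke Proposition~\ref{pro:minxent}, which reduces the entire statement to analysis of the scalar quantity $\inf_{\dot{P}\in\dot{\mathcal{P}}(\ddot{P})} I(\dot{P}||\ddot{P})$ and the measures that attain it. Throughout, I would exploit the fact that information divergence is non-negative and vanishes only at $P=\ddot{P}$, so that the value $0$ determines its attainer uniquely.

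For criterion~\ref{enu:Complete-knowledge}, the hypothesis $\dot{\mathcal{P}}=\{\dot{P}\}$ together with the finiteness assumption of Proposition~\ref{pro:minxent} forces $\dot{\mathcal{P}}(\ddot{P})=\{\dot{P}\}$, so the infimum in~\eqref{eq:minxent-result} is attained only at $\dot{P}$ and $\hat{P}=\dot{P}$. For criterion~\ref{enu:Negligible-knowledge}, $\ddot{P}\in\dot{\mathcal{P}}$ combined with $I(\ddot{P}||\ddot{P})=0$ places the benchmark itself into $\dot{\mathcal{P}}(\ddot{P})$, where it attains the minimum value of $I(\cdot||\ddot{P})$; since that value is attained only at $\ddot{P}$, Proposition~\ref{pro:minxent} gives $\hat{P}=\ddot{P}$.

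Criterion~\ref{enu:Continuity-between-extremes} is where the main work lies. Set $\alpha=I(\tilde{P}(\bullet;\dot{\mathcal{P}},\ddot{P})||\ddot{P})=\inf_{\dot{P}\in\dot{\mathcal{P}}(\ddot{P})} I(\dot{P}||\ddot{P})$, using Proposition~\ref{pro:minxent}. Hypothesis~\eqref{eq:contin-condition} yields $|I(P||\ddot{P})-I(\dot{P}||\ddot{P})|\le D$ for every $P\in\mathcal{P}^{\star}$ and every $\dot{P}\in\dot{\mathcal{P}}(\ddot{P})$; in particular each $P\in\mathcal{P}^{\star}$ has finite divergence with respect to $\ddot{P}$, so $(\dot{\mathcal{P}}\cup\mathcal{P}^{\star})(\ddot{P})=\dot{\mathcal{P}}(\ddot{P})\cup\mathcal{P}^{\star}$, which is convex by assumption. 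Hence Proposition~\ref{pro:minxent} also applies to the enlarged class and gives $I(\tilde{P}(\bullet;\dot{\mathcal{P}}\cup\mathcal{P}^{\star},\ddot{P})||\ddot{P})=\min(\alpha,\beta)$, where $\beta=\inf_{P\in\mathcal{P}^{\star}} I(P||\ddot{P})$. Taking the infimum over $\dot{P}$ on each side of the pointwise bound yields $\alpha-D\le I(P||\ddot{P})\le\alpha+D$ uniformly in $P\in\mathcal{P}^{\star}$, so $\beta\in[\alpha-D,\alpha+D]$ and therefore $\min(\alpha,\beta)\in[\alpha-D,\alpha]$, which is~\eqref{eq:contin-criterion}. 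The only real obstacle is this last bookkeeping step: one must check that the convexity and finiteness hypotheses of Proposition~\ref{pro:minxent} survive enlargement from $\dot{\mathcal{P}}$ to $\dot{\mathcal{P}}\cup\mathcal{P}^{\star}$ before the minimization characterization can be applied on both sides of~\eqref{eq:contin-criterion}; once that is done, the continuity bound follows from an elementary infimum-tightening argument.
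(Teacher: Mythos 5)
Your proof is correct and follows essentially the same route as the paper's: all three criteria are deduced from the minimum-divergence characterization of Proposition~\ref{pro:minxent}, and your treatment of Criterion~\ref{enu:Continuity-between-extremes} reduces to the same comparison of $\alpha=\inf_{\dot{P}\in\dot{\mathcal{P}}\left(\ddot{P}\right)}I\left(\dot{P}||\ddot{P}\right)$ with $\alpha\wedge\beta$, argued directly where the paper argues by contradiction. You are in fact somewhat more explicit than the paper in checking that the hypotheses of Proposition~\ref{pro:minxent} (finiteness of the divergences over $\mathcal{P}^{\star}$ and convexity) survive the enlargement to $\dot{\mathcal{P}}\cup\mathcal{P}^{\star}$, which the paper's proof uses tacitly.
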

\begin{proof}
Since the criteria are only required under the conditions of Proposition
\ref{pro:minxent}, it will suffice to prove that the criteria follow
from equation \eqref{eq:minxent-result}. If $\dot{\mathcal{P}}$
has a single member $\dot{P}$, then equation \eqref{eq:minxent-result}
implies that $\hat{P}=\dot{P}$, thereby ensuring Criterion \ref{enu:Complete-knowledge}.
Similarly, if $\ddot{P}\in\dot{\mathcal{P}}$, then equation \eqref{eq:minxent-result}
implies that $\hat{P}=\ddot{P}$, thus proving that Criterion \ref{enu:Negligible-knowledge}
is met. Assume, contrary to Criterion \ref{enu:Continuity-between-extremes},
that there exist a $D\ge0$ and a $\mathcal{P}^{\star}\subseteq\mathcal{P}$
such that $\dot{\mathcal{P}}\left(\ddot{P}\right)\cup\mathcal{P}^{\star}$
is convex, equation \eqref{eq:contin-condition} is true, and equation
\eqref{eq:contin-criterion} is false with $\tilde{P}\left(\bullet;\dot{\mathcal{P}}\cup\mathcal{P}^{\star},\ddot{P}\right)$
and $\tilde{P}\left(\bullet;\dot{\mathcal{P}},\ddot{P}\right)$ equal
to the blended posteriors respectively using $\dot{\mathcal{P}}\cup\mathcal{P}^{\star}$
and $\dot{\mathcal{P}}$ as the sets of Bayesian posteriors. Then
equation \eqref{eq:minxent-result} can be written as
\[
I\left(\tilde{P}\left(\bullet;\dot{\mathcal{P}}\cup\mathcal{P}^{\star},\ddot{P}\right)||\ddot{P}\right)=\inf_{\dot{P}\in\dot{\mathcal{P}}\left(\ddot{P}\right)\cup\mathcal{P}^{\star}}I\left(\dot{P}||\ddot{P}\right);
\]
\[
I\left(\tilde{P}\left(\bullet;\dot{\mathcal{P}},\ddot{P}\right)||\ddot{P}\right)=\inf_{\dot{P}\in\dot{\mathcal{P}}\left(\ddot{P}\right)}I\left(\dot{P}||\ddot{P}\right).
\]
Hence, with $a\wedge b$ signifying the minimum of $a$ and $b$,
\[
\left|I\left(\tilde{P}\left(\bullet;\dot{\mathcal{P}}\cup\mathcal{P}^{\star},\ddot{P}\right)||\ddot{P}\right)-I\left(\tilde{P}\left(\bullet;\dot{\mathcal{P}},\ddot{P}\right)||\ddot{P}\right)\right|=
\]
\[
\inf_{\dot{P}\in\dot{\mathcal{P}}\left(\ddot{P}\right)}I\left(\dot{P}||\ddot{P}\right)-\inf_{\dot{P}\in\dot{\mathcal{P}}\left(\ddot{P}\right)}I\left(\dot{P}||\ddot{P}\right)\wedge\inf_{P\in\mathcal{P}^{\star}}I\left(P||\ddot{P}\right),
\]
which cannot exceed $\inf_{\dot{P}\in\dot{\mathcal{P}}\left(\ddot{P}\right)}I\left(\dot{P}||\ddot{P}\right)-\inf_{P\in\mathcal{P}^{\star}}I\left(P||\ddot{P}\right)$
and thus, according to equation \eqref{eq:contin-condition}, cannot
exceed $D$. Therefore, the above assumption that equation \eqref{eq:contin-criterion}
is false is contradicted, thereby establishing satisfaction of Criterion
3. \end{proof}
\begin{example}
Suppose the set of possible priors consists of a single frequency-matching
prior, i.e., a prior that leads to 95\% posterior credible intervals
that are equal to 95\% confidence intervals, etc. If the benchmark
posterior is the confidence posterior that yields the same confidence
intervals, then it is the Bayesian posterior distribution corresponding
to the prior. In that case, the blended distribution is equal to that
Bayesian/confidence posterior. Thus, the first condition of blended
inference applies. The second condition would instead apply if the
set of possible priors contained at least one other prior in addition
to the frequency-matching prior. 
\end{example}
Criterion \ref{enu:Continuity-between-extremes} is much stronger
than the heuristic idea of continuity introduced in Section \ref{sub:Motivation}.
Its use of information divergence can be generalized to other measures
of divergence (Remark \ref{rem:generalization}).

\section{\label{sec:Hypothesis-testing}Blended hypothesis testing}

A fertile field of application for the theory of Section \ref{sec:General-theory}
is that of testing hypotheses, as outlined in Section \ref{sec:Overview}.
Building on Example \ref{exa:testing}, this section provides methodology
for a wide class of models used in hypothesis testing.

\subsection{\label{sub:Bound}A bound on the Bayesian posterior}

Defining that class in terms of the concepts of Section \ref{sub:Bayesian-posteriors}
requires additional notation. For a continuous sample space $\mathcal{X}$
and a function $p:\mathcal{X}\rightarrow\left[0,1\right]$ such that
$p\left(X\right)\sim\unif\left(0,1\right)$ under a null hypothesis,
each $p\left(x\right)$ for any $x\in\mathcal{X}$ will be called
a \emph{p-value}. Using some dominating measure, let $f_{0}$ and
$f_{1}$ denote probability density functions of $p\left(X\right)$
under the null hypothesis $\left(\dot{\theta}=0\right)$ and under
the alternative hypothesis $\left(\dot{\theta}=1\right)$, respectively.
For the observed $x$, the likelihood ratio $f_{0}\left(p\left(x\right)\right)/f_{1}\left(p\left(x\right)\right)$
is called the \emph{Bayes factor} since, for a prior distribution
$P_{\ast}^{\pri}$, Bayes's theorem gives
\begin{equation}
\frac{\varphi\left(p\left(x\right)\right)}{1-\varphi\left(p\left(x\right)\right)}=\frac{P_{\ast}^{\pri}\left(\dot{\theta}=0\right)}{P_{\ast}^{\pri}\left(\dot{\theta}=1\right)}\frac{f_{0}\left(p\left(x\right)\right)}{f_{1}\left(p\left(x\right)\right)},\label{eq:posterior-odds}
\end{equation}
where $\varphi\left(p\left(x\right)\right)=P_{\ast}^{\pri}\left(\dot{\theta}=0\vert p\left(X\right)=p\left(x\right)\right)$.
Here, as $\varphi\left(p\left(x\right)\right)$ is a \emph{local false
discovery rate} (LFDR), the letter $\varphi$ abbreviates {}``false''
(\citealp{efron_large-scale_2010}; \citealp{BFDR}). In a parametric
setting, $f_{1}\left(p\left(x\right)\right)$ would be the likelihood
integrated over the prior distribution conditional on the alternative
hypothesis. 

Let $\kappa:\mathcal{X}\rightarrow\mathbb{R}$ denote the function
defined by the transformation $\kappa\left(x\right)=-\log p\left(x\right)$
for all $x\in\mathcal{X}$. Then a probability density of $\kappa\left(X\right)$
under the null hypothesis is the standard exponential density $g_{0}\left(\kappa\left(x\right)\right)=e^{-\kappa\left(x\right)}$.
Assume that, under the alternative hypothesis $\left(\dot{\theta}=1\right)$,
$\kappa\left(X\right)$ admits a density function $g_{1}$ with respect
to the same dominating measure as $g_{0}$. It follows that $g_{0}\left(\kappa\left(x\right)\right)/g_{1}\left(\kappa\left(x\right)\right)=f_{0}\left(p\left(x\right)\right)/f_{1}\left(p\left(x\right)\right)$.
The \emph{hazard rate} $h_{1}\left(\kappa\left(x\right)\right)$ under
the alternative is defined by $h_{1}\left(\kappa\left(x\right)\right)=g_{1}\left(\kappa\left(x\right)\right)/\int_{\kappa\left(x\right)}^{\infty}g_{1}\left(k\right)dk$
for all $x\in\mathcal{X}$, and $h_{1}:\left(0,\infty\right)\rightarrow\left[0,\infty\right)$
is called the \emph{hazard rate function}. 

\citet{RefWorks:1218} obtained the following lower bound $\underline{b}\left(p\left(x\right)\right)$
of the Bayes factor $b\left(x\right)$.
\begin{lem}
If $h_{1}$ is nonincreasing, then, for all $x\in\mathcal{X}$, 
\begin{equation}
b\left(p\left(x\right)\right)=\frac{f_{0}\left(p\left(x\right)\right)}{f_{1}\left(p\left(x\right)\right)}\ge\underline{b}\left(p\left(x\right)\right)=\begin{cases}
-ep\left(x\right)\log p\left(x\right) & \text{if }p\left(x\right)<1/e;\\
1 & \text{if }p\left(x\right)\ge1/e.
\end{cases}\label{eq:BF-bound}
\end{equation}

\end{lem}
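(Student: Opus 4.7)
The plan is to reparametrize on the $\kappa = -\log p$ scale, on which $g_{0}(k) = e^{-k}$ and the Bayes factor takes the convenient form $b(p(x)) = e^{-\kappa(x)}/g_{1}(\kappa(x))$. Proving a lower bound on $b$ then reduces to proving a matching upper bound on $g_{1}(k)$ for each $k > 0$, and the hypothesis that $h_{1}$ is nonincreasing is precisely what will produce such a bound.

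The key technical step is a short hazard-rate calculation. Starting from the standard identity $g_{1}(k) = h_{1}(k)\exp\bigl(-\int_{0}^{k} h_{1}(s)\, ds\bigr)$ and using $h_{1}(s) \ge h_{1}(k)$ for $s \in [0,k]$, I would conclude $\int_{0}^{k} h_{1}(s)\, ds \ge k\, h_{1}(k)$ and hence $g_{1}(k) \le h_{1}(k)\, e^{-k\, h_{1}(k)}$. Writing $\lambda = h_{1}(k)$ and treating it as a free nonnegative number, elementary calculus gives $\max_{\lambda \ge 0} \lambda e^{-k\lambda} = 1/(ek)$, attained at $\lambda = 1/k$. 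Therefore $g_{1}(k) \le 1/(ek)$, so $b(p(x)) \ge e^{-k}\cdot ek = ek\, e^{-k} = -e\, p(x)\log p(x)$ after substituting $p(x) = e^{-k}$. For $k > 1$, i.e., $p(x) < 1/e$, this is exactly the first branch of the claim.

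The second branch $p(x) \ge 1/e$ is the main obstacle, since for $k \le 1$ the unconstrained maximizer $\lambda = 1/k \ge 1$ only delivers $b \ge -ep\log p \le 1$, which is weaker than the asserted $b \ge 1$; in fact the hypothesis ``$h_{1}$ nonincreasing'' alone admits counterexamples such as $g_{1}(k) = 2e^{-2k}$, which has constant hazard $h_{1} \equiv 2$ but yields $b(p) = 1/(2p) < 1$ on $(1/2, 1]$. I would close the gap by invoking the restriction $h_{1} \le 1$ implicit in the Sellke--Bayarri--Berger--style setup of \citet{RefWorks:1218} (for constant hazards $h_{1} \equiv \xi$ this is exactly the classical restriction to $f_{1}(p) = \xi p^{\xi - 1}$ with $\xi \in (0,1]$): constraining $\lambda \in [0,1]$ and noting that $\lambda e^{-k\lambda}$ has derivative $(1 - k\lambda)\, e^{-k\lambda} \ge 0$ on $[0,1]$ when $k \le 1$, the constrained maximum is attained at $\lambda = 1$ with value $e^{-k}$, so $g_{1}(k) \le e^{-k} = g_{0}(k)$ and hence $b(p(x)) \ge 1$. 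With that restriction in force, the single hazard-rate argument above handles both branches uniformly.
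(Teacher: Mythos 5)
The paper does not actually prove this lemma: it states the bound as a result obtained by \citet{RefWorks:1218} and uses it as given, so there is no in-paper proof to compare against. Your argument for the branch $p(x)<1/e$ is precisely the cited hazard-rate proof: write $g_{1}(k)=h_{1}(k)\exp\left(-\int_{0}^{k}h_{1}(s)\,ds\right)$, use monotonicity to get $g_{1}(k)\le h_{1}(k)e^{-kh_{1}(k)}\le 1/(ek)$, and divide into $g_{0}(k)=e^{-k}$. That part is complete and correct, and as you note it actually yields $b\ge-ep\log p$ for every $p\in(0,1)$, not only for $p<1/e$.

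Your diagnosis of the branch $p(x)\ge1/e$ is also correct, and it exposes a defect in the statement rather than a gap in your proof. The density $g_{1}(k)=2e^{-2k}$ (equivalently $f_{1}(p)=2p$) has constant, hence nonincreasing, hazard rate, yet gives $b(p)=1/(2p)<1$ for $p>1/2$, so $b\ge1$ on $p\ge1/e$ does not follow from the stated hypothesis alone. The repair you propose is the right one: the additional constraint $h_{1}\le1$, which for a nonincreasing hazard is equivalent to requiring the survival function of $\kappa(X)$ under the alternative to dominate $e^{-k}$ (i.e., $p(X)$ stochastically no larger under the alternative than under the null), and which reduces to the restriction $\xi\in(0,1]$ in the Beta subfamily $f_{1}(p)=\xi p^{\xi-1}$ that motivates the calibration. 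With $\lambda\in[0,1]$ and $k\le1$ your observation that $\lambda e^{-k\lambda}$ is increasing on that range gives $g_{1}(k)\le e^{-k}$ and hence $b\ge1$. In short: the first branch is a faithful reconstruction of the cited proof; the second branch requires an additional (natural and presumably intended) hypothesis that the lemma as printed omits, and your counterexample establishes that the omission is not merely cosmetic.
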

The condition on the hazard rate defines a wide class of models that
is useful for testing simple null hypotheses. A broad subclass will
now be defined by imposing constraints on $\pi_{0}=P_{\ast}^{\pri}\left(\dot{\theta}=0\right),$
the prior probability that the null hypothesis is true, in addition
to the hazard rate condition. Specifically, $\pi_{0}$ is known to
have $\underline{\pi}_{0}\in\left[0,1\right]$ as a lower bound. Thus,
rearranging equation \eqref{eq:posterior-odds} as
\[
\varphi\left(p\left(x\right)\right)=\left(1+\left(\frac{1-\pi_{0}}{\pi_{0}b_{0}\left(p\left(x\right)\right)}\right)\right)^{-1},
\]
a lower bound on $\varphi\left(p\left(x\right)\right)$ is 
\[
\underline{\Pr}\left(H_{0}|p\left(X\right)=p\left(x\right)\right)=\underline{\varphi}\left(p\left(x\right)\right)=\left(1+\left(\frac{1-\underline{\pi}_{0}}{\underline{\pi}_{0}\underline{b}\left(p\left(x\right)\right)}\right)\right)^{-1},
\]
leading to equation \eqref{eq:LFDR-bound}.

Let $\mathcal{P}$ consist of all probability distributions on $\left(\Theta,\mathcal{A}\right)=\left(\left\{ 0,1\right\} ,2^{\left\{ 0,1\right\} }\right)$.
The subset $\dot{\mathcal{P}}$ consists of all $\dot{P}\in\mathcal{P}$
such that $\dot{P}\left(\dot{\theta}=0\right)\ge\underline{\varphi}\left(p\left(x\right)\right)$.

\subsection{\label{sub:Confidence-benchmark}A confidence benchmark posterior}

\subsubsection{\label{sub:Confidence-posterior-theory}Confidence posterior theory}

The following parametric framework facilitates the application of
 Section \ref{sub:Benchmark-posteriors} to hypothesis testing. The
observation $x$ is an outcome of the random variable $\ddot{X}$
of probability space $\left(\mathcal{X},\mathfrak{X},P_{\theta_{\ast},\lambda_{\ast}}\right)$,
where the interest parameter $\theta_{\ast}\in\ddot{\Theta}_{\ast}$
and a nuisance parameter $\lambda_{\ast}$ (in some set $\ddot{\Lambda}_{\ast}$)
are unknown. Let $S:\ddot{\Theta}_{\ast}\times\mathcal{X}\rightarrow\left[0,1\right]$
and $t:\mathcal{X}\times\ddot{\Theta}_{\ast}\rightarrow\mathbb{R}$
denote functions such that $S\left(\bullet;x\right)$ is a distribution
function, $S\left(\theta_{\ast};X\right)\sim\unif\left(0,1\right)$,
and
\[
S\left(\theta_{\ast};x\right)=P_{\theta_{\ast},\lambda_{\ast}}\left(t\left(\ddot{X};\theta_{\ast}\right)\ge t\left(x;\theta_{\ast}\right)\right)
\]
for all $x\in\mathcal{X}$, $\theta_{\ast}\in\ddot{\Theta}_{\ast}$,
and $\lambda_{\ast}\in\ddot{\Lambda}_{\ast}$. $S$ is known as a
\emph{significance function}, and $t$ as a \emph{pivot }or test statistic.
It follows that $p\left(x\right)=S\left(0;x\right)$ is a p-value
for testing the hypothesis that $\theta_{\ast}=0$ and that $\left[S^{-1}\left(\alpha;X\right),S^{-1}\left(\beta;X\right)\right]$
is a $\left(\beta-\alpha\right)100\%$ confidence interval for $\theta_{\ast}$
given any $\alpha\in\left[0,1\right]$ and $\beta\in\left[\alpha,1\right]$.
Thus, whether a significance function is found from p-values over
a set of simple null hypotheses or instead from a set of nested confidence
intervals, it contains the information needed to derive either (\citealp{RefWorks:127};
\citealp{RefWorks:1037}; \citealp{CoherentFrequentism,conditional2009}).

Let $\ddot{\theta}_{\ast}$ denote the random variable of the probability
measure $\ddot{P}_{\ast}$ that has $S\left(\bullet;x\right)$ as
its distribution function. In other words, $\ddot{P}_{\ast}\left(\ddot{\theta}_{\ast}\le\theta_{\ast}\right)=S\left(\theta_{\ast};x\right)$
for all $\theta_{\ast}\in\ddot{\Theta}_{\ast}$. $\ddot{P}_{\ast}$
is called a \emph{confidence posterior (distribution)} since it equates
the frequentist coverage rate of a confidence interval with the probability
that the parameter lies in the fixed, observed confidence interval:
\begin{eqnarray*}
\beta-\alpha & = & P_{\theta_{\ast},\lambda_{\ast}}\left(\theta_{\ast}\in\left[S^{-1}\left(\alpha;X\right),S^{-1}\left(\beta;X\right)\right]\right)\\
 & = & \ddot{P}_{\ast}\left(\ddot{\theta}_{\ast}\in\left[S^{-1}\left(\alpha;x\right),S^{-1}\left(\beta;x\right)\right]\right)
\end{eqnarray*}
for all $x\in\mathcal{X}$, $\theta_{\ast}\in\ddot{\Theta}_{\ast}$,
and $\lambda_{\ast}\in\ddot{\Lambda}_{\ast}$. The term {}``confidence
posterior'' ~\citep{CoherentFrequentism,conditional2009} is preferred
here over the usual term {}``confidence distribution'' \citep{RefWorks:127}
to emphasize its use as an alternative to Bayesian posterior distributions.
\citet{Polansky2007b}, \citet{RefWorks:1037}, and \citet{CoherentFrequentism}
provide generalizations to vector parameters of interest. Extensions
based on multiple comparison procedures are sketched in Remark \ref{rem:MCP}.

\subsubsection{A confidence posterior for testing}

For the application to two-sided testing of a simple null hypothesis,
let $\theta_{\ast}=\left|\theta_{\ast\ast}\right|$, the absolute
value of a real parameter $\theta_{\ast\ast}$ of interest, leading
to $\ddot{\Theta}_{\ast}=\left[0,\infty\right)$. Then $p\left(x\right)=S\left(0;x\right)$
is equivalent to a two-tailed p-value for testing the hypothesis that
$\theta_{\ast\ast}=0$. Since $\ddot{P}_{\ast}\left(\ddot{\theta}_{\ast}\le0\right)=S\left(0;x\right)$
and since $\ddot{P}_{\ast}\left(\ddot{\theta}_{\ast}\le0\right)=\ddot{P}_{\ast}\left(\ddot{\theta}_{\ast}=0\right)$,
it follows that $p\left(x\right)=\ddot{P}_{\ast}\left(\ddot{\theta}_{\ast}=0\right)$,
i.e., the p-value is equal to the probability that the mull hypothesis
is true. 

If $\ddot{P}_{\ast}$ is the only confidence posterior under consideration,
then $\ddot{\mathcal{P}}_{\ast}=\left\{ \ddot{P}_{\ast}\right\} $,
and there is no need for an inference process. Following the terminology
of Example \ref{exa:inference-then-transformation}, $\ddot{\tau}:\ddot{\Theta}_{\ast}\rightarrow\Theta$
is defined by $\ddot{\tau}\left(\ddot{\theta}_{\ast}\right)=1_{\left(0,\infty\right)}\left(\ddot{\theta}_{\ast}\right)$.
By implication, $\ddot{\theta}=0$ if $\ddot{\theta}_{\ast}=0$ and
$\ddot{\theta}=1$ if $\ddot{\theta}_{\ast}>0$. Thus, $p\left(x\right)=\ddot{P}_{\ast}\left(\ddot{\theta}_{\ast}=0\right)$
ensures that $\ddot{P}\left(\ddot{\theta}=0\right)=p\left(x\right)$,
which in turn implies $\ddot{P}\left(\ddot{\theta}=1\right)=1-p\left(x\right)$.
\begin{example}
In the various $t$-tests, $\theta_{\ast}$ is the mean of $X$ or
a difference in means, and the statistic $t\left(X;0\right)$ is the
absolute value of a statistic with a Student $t$ distribution of
known degrees of freedom. The above formalism then gives the usual
two-sided p-value from a $t$-test as $\ddot{P}\left(\ddot{\theta}=0\right)$
and $p\left(x\right)$. Specials cases of this $\ddot{P}$ have been
presented as fiducial distributions (\citet{RefWorks:1369};\citealp{smallScale}).
\end{example}

\subsection{\label{sub:Blended-posterior}A blended posterior for testing}

This subsection blends the above set $\dot{\mathcal{P}}$ of Bayesian
posteriors with the above confidence posterior $\ddot{P}$ as prescribed
by Section \ref{sub:Blended-inference}. Gathering the results of
Sections \ref{sub:Bound} and \ref{sub:Confidence-benchmark}, 
\[
\dot{\mathcal{P}}=\left\{ \dot{P}\in\mathcal{P}:\dot{P}\left(\dot{\theta}=0\right)\ge\underline{\varphi}\left(p\left(x\right)\right)\right\} ;
\]
\[
\ddot{P}\left(\ddot{\theta}=0\right)=p\left(x\right)=1-\ddot{P}\left(\ddot{\theta}=1\right).
\]
Equation \eqref{eq:finite} then implies that
\[
\dot{\mathcal{P}}\left(\ddot{P}\right)=\left\{ \dot{P}\in\mathcal{P}:\underline{\varphi}\left(p\left(x\right)\right)\le\dot{P}\left(\dot{\theta}=0\right)<1\right\} ,
\]
in which the first inequality is strict if and only if $\underline{\varphi}\left(p\left(x\right)\right)=0$
and the second inequality is strict unless $p\left(x\right)=1$. Since
$\dot{\mathcal{P}}\left(\ddot{P}\right)$ is convex, Proposition \ref{pro:minxent}
yields
\begin{equation}
\hat{P}\left(\theta=0\right)=\begin{cases}
\underline{\varphi}\left(p\left(x\right)\right) & \text{if }p\left(x\right)<\underline{\varphi}\left(p\left(x\right)\right)\\
p\left(x\right) & \text{if }p\left(x\right)\ge\underline{\varphi}\left(p\left(x\right)\right)
\end{cases},\label{eq:blended-probability}
\end{equation}
where $\theta$ is the random variable of distribution $\hat{P}$.
With the identities $\underline{\varphi}\left(p\left(x\right)\right)=\underline{\Pr}\left(H_{0}|p\left(X\right)=p\left(x\right)\right)$
and $\hat{P}\left(\theta=0\right)=\Pr\left(H_{0};p\left(x\right)\right)$
and with the establishment of equation \eqref{eq:LFDR-bound} by Section
\ref{sub:Bound}, equation \eqref{eq:blended-probability} verifies
the claim of equation \eqref{eq:blendedLFDR} made in Section \ref{sec:Overview}.

\section{\label{sec:Remarks}Remarks}
\begin{rem}
\label{rem:interpretation}As mentioned in Section \ref{sub:Motivation},
the use of Bayes's theorem with proper priors need not involve subjective
interpretations of probability. The set of posteriors may be determined
by interval constraints on the corresponding priors without any requirement
that they model levels of belief \citep{ISI:000086923900003,ISI:000173555100002,ISI:000224953000004}.
However, subjective applications of blended inference are also possible.
While the framework was developed with an unknown prior in mind, the
concept of imprecise or indeterminate probability \citep{RefWorks:Walley1991}
could take the place of the set in which an unknown prior lies. By
allowing the partial order of agent preferences, imprecise probability
theories need not assume the existence of any true prior \citep{RefWorks:Walley1991,ColettiScozzafava2002b}.
As often happens, the same mathematical framework is subject to very
different philosophical interpretations. 
\end{rem}
~
\begin{rem}
\label{rem:maxent}Technically, the principle of maximum entropy \citep{Paris232410,Paris199777}
mentioned in Section \ref{sub:Bayesian-posteriors} could be used
if $\Theta$ is finite or countable infinite. However, unlike the
proposed methodology, that practice is equivalent to making the benchmark
posterior $\ddot{P}$ depend on the function $\dot{\tau}$ that maps
a parameter space to $\Theta$ rather than on a method of data analysis
that is coherent in the sense that its posterior depends on the data
rather than on the hypothesis. If blending with such a method is not
desired, one may average the Bayesian posteriors with respect to some
measure that is not a function of $\Theta$. For example, averaging
with respect to the Lebesgue measure, as \citet{CoherentFrequentism} did with confidence posteriors,
 leads to $\left(1+\underline{\varphi}\left(p\left(x\right)\right)\right)/2$
as the posterior probability of the null hypothesis under the assumptions
of Section \ref{sub:Bound}. Remark \ref{rem:minxent} discusses a
more tenable version of the maximum entropy principle for blended
inference.
\end{rem}
~
\begin{rem}
\label{rem:generalization}Using definitions of divergence that include
information divergence \eqref{eq:cross-entropy} as a special case,
\citet{Gruenwald20041367} and \citet{ISI:000222924800003} generalized
variations of Proposition \ref{pro:minxent}. The theory of blended
inference extends accordingly. 
\end{rem}
~
\begin{rem}
\label{rem:infinity-allowed}A generalization of Section \ref{sec:General-theory}
in a different direction from that of Remark \ref{rem:generalization}
replaces each {}``$\inf_{\dot{P}\in\dot{\mathcal{P}}\left(\ddot{P}\right)}$''
of equation \eqref{eq:maximin} with {}``$\inf_{\dot{P}\in\dot{\mathcal{P}}}$.''
For that optimization problem, Theorem 2 of \citet{Topsoe2007b}
has the condition that $\dot{P}\in\dot{\mathcal{P}}\implies I\left(\dot{P}||\ddot{P}\right)<\infty$
in addition to the convexity of $\dot{\mathcal{P}}$ that Proposition
\ref{pro:minxent} of the present paper requires. Thus, in that formulation,
the blended posterior $\hat{P}$ need not satisfy equation \eqref{eq:minxent-result}
even if $\dot{\mathcal{P}}$ is convex.
\end{rem}
~
\begin{rem}
\label{rem:minxent}A posterior distribution $\hat{P}$ that \emph{is
defined by} 
\begin{equation}
I\left(\hat{P}||\ddot{P}\right)=\inf_{\dot{P}\in\dot{\mathcal{P}}}I\left(\dot{P}||\ddot{P}\right)\label{eq:minxent}
\end{equation}
satisfies the desiderata of Section \ref{sub:Properties} whether
or not the conditions of Proposition \ref{pro:minxent} hold. While
certain axiomatic systems \citep[e.g.,][]{Csiszar19912032} lead to
this generalization of the principle of maximum entropy (Remark \ref{rem:maxent}),
the optimization problem of equation \eqref{eq:maximin} seems more
compelling in this context and defines $\hat{P}$ even when no distribution
satisfying equation \eqref{eq:minxent} exists.
\end{rem}
~
\begin{rem}
\label{rem:MCP}In the presence of multiple comparisons, the confidence
posteriors of Section \ref{sub:Confidence-posterior-theory} may be
adjusted to control a family-wise error rate or false coverage rate
\citep{RefWorks:1276}, if desired. Either error rate would then take
the place of the conventional confidence level as the confidence posterior
probability.
\end{rem}
~
\section*{Acknowledgments}

This research was partially supported  by the Canada Foundation for
Innovation, by the Ministry of Research and Innovation of Ontario,
and by the Faculty of Medicine of the University of Ottawa. 

\begin{flushleft}
\bibliographystyle{elsarticle-harv}
\bibliography{refman}

\par\end{flushleft}

\newpage{}

\end{document}